\documentclass[11pt,letterpaper]{article}
\usepackage{amsfonts}
\usepackage{amsthm}
\usepackage{amsmath}
\usepackage[english]{babel}
\usepackage{url}
\usepackage[symbol]{footmisc}
\usepackage{mathtools}
\usepackage{algorithm}
\usepackage{algpseudocode}
\usepackage[letterpaper, margin=1in]{geometry}
\usepackage{mathtools}

\newtheorem{theorem}{Theorem}[section]

\newtheorem{lemma}[theorem]{Lemma}
\newtheorem{definition}[theorem]{Definition}

\DeclareMathOperator{\E}{\mathbb{E}}
\DeclareMathOperator{\Tr}{Tr}
\DeclareMathOperator{\Prob}{\mathbb{P}}
\DeclareMathOperator{\Rep}{Rep}

\DeclarePairedDelimiter{\ceil}{\lceil}{\rceil}

\title{	Fast computation of p-values for the permutation test based on Pearson's correlation coefficient and other statistical tests\footnote{This work was partly accomplished as a free project at Nebion AG.}}
\author{
Jean-Marie Droz\footnote{Nebion AG, Switzerland. \tt{droz.jm@gmail.com}.}
}

\date{\today}

\begin{document}
\maketitle

\begin{abstract} 

Permutation tests are among the simplest and most widely used statistical tools. Their p-values can be computed by a straightforward sampling of permutations. However, this way of computing p-values is often so slow that it is replaced by an approximation, which is accurate only for part of the interesting range of parameters. Moreover, the accuracy of the approximation can usually not be improved by increasing the computation time.

We introduce a new sampling-based algorithm which uses the fast Fourier transform to compute p-values for the permutation test based on Pearson's correlation coefficient. The algorithm is practically and asymptotically faster than straightforward sampling. For input size $n$ and target accuracy $\epsilon$, its complexity is $O(\frac{\log n}{\epsilon^2})$ against $O(\frac{n}{\epsilon^2})$ for the straightforward sampling approach. (We assume that $\epsilon$ is sufficiently small, as will be the case in most applications.) The idea behind the algorithm can also be used to accelerate the computation of p-values for many other common statistical tests. The algorithm is easy to implement, but its analysis involves results from the representation theory of the symmetric group.
\end{abstract}

\newpage
\section{Introduction}
For many commonly used statistical tests, notably permutation tests, p-values are too difficult to compute exactly, so that most statistical software use approximate calculations instead. Two main kinds of approximations can be used: 1) Approximations based on sampling\footnote{What we call sampling is sometimes called ``resampling''.} (the Monte-Carlo method) and 2) approximations based on replacing a probability distribution by a simpler one, often the normal distribution.
 The second kind of approximation typically yields fast algorithms, but is justified only asymptotically. The accuracy of methods using the second kind of approximation is also more difficult to evaluate and cannot be tuned. In contrast, approximation algorithms based on sampling are usually slow, but can be made as accurate as required at the price of an increase of their running time.  

The present article introduces a new method for the computation of p-values for numerous permutation-based statistical tests. Our method is based on approximation by sampling and conserves its advantages: the accuracy of the estimates can be tuned easily and the algorithms are simple to implement. However, our method yields algorithms that are much faster than straightforward sampling. We will prove our method's effectiveness in a simple case of practical importance: Pearson's correlation coefficient. 

The rest of the introduction will explain our main example. Section \ref{section:MainResult} introduces our method through its application to a statistical test based on Pearson's correlation. Sections \ref{section:covariancebound}, \ref{section:checkerboard}, \ref{section:dimensionbound}, and \ref{section:characterbound} establish the complexity of the algorithm for the p-value of the statistical test based on Pearson's correlation coefficient. Sections \ref{section:application} and  \ref{section:conservativeness} explain how p-value computations for other statistical tests can be accelerated and show how to make the p-values exact or ``conservative'' in a precise sense.

\subsection{The permutation test for correlation coefficients}

For $i\in \{1,\ldots,n\}$, let $U_i,V_i\in \mathbb{R}$ be pairs of samples of two possibly dependent random variables. Let $U,V\in \mathbb{R}^n$ be the two vectors representing the samples. 
The Pearson's correlation coefficient of $U$ and $V$ is given by $\frac{Cov(U,V)}{\sqrt{Var(U)Var(V)}}$. It constitutes a natural measure of the correlation between the two random variables. 
Pearson's correlation coefficient can be used to build a permutation test for the null-hypothesis that the two random variables are independent. 
The p-value of the test is given by $\Prob(\sigma U\cdot V\geq U \cdot V)$, where $\sigma$ is a permutation drawn from the uniform distribution over the symmetric group $\mathbb{S}_n$ and the permutations act on vectors by permuting their components.

The task of computing a p-value for the correlation coefficient is equivalent\footnote{The second task may seem more general, but it can be reduced to the first one by adding two components to the vectors $U$ and $V$.}
to the task of computing $\Prob(\sigma U\cdot V\geq t)$ for an arbitrary $t\in \mathbb{R}$.
We will mainly consider this more general task. Since we can reduce the problem of counting solutions to 0-1 knapsack with $n$ items to computing $n$ p-values of correlation coefficients, computing the p-value exactly is $\#P$-hard. However, the p-value can be approximated using a random algorithm. A simple Monte-Carlo sampling of the permutation $\sigma$ followed by a computation of correlation coefficients allows a randomized computation of the p-value in time $O(\frac{n}{\epsilon^2})$, where the algorithm is allowed to err by more than $\epsilon$ with probability inferior to $\frac{1}{3}$.

\section{Main result}
\label{section:MainResult}
We give an algorithm to compute the p-value of the correlation coefficient in time $O(max(n\log n,\frac{\log n}{\epsilon^2}))$. 

\begin{algorithm}
\caption{P-value estimation}
\label{alg:A}
\begin{algorithmic}

\Require{Two vectors $U,V\in \mathbb{R}^n$, an accuracy $\epsilon > 0$ and a probability of failure $\delta>0$}
\Ensure{An approximation $\bar{x}$ of $\Prob(\sigma U\cdot V\geq t)$ such that the error is greater than $\epsilon\cdot \sqrt{\Prob(\sigma U\cdot V\geq t)}$ with a probability bounded above by $\delta$}

\For{$i \in \{1,\dots, \ceil{\frac{C}{\delta n\epsilon^2}}\}$ for a constant $C$ to be chosen later}
        \State {Pick permutations $\sigma_1,\sigma_2$ at random. Compute $y_{i,k} = \sigma_1 U\cdot\lambda^k\sigma_2 V$ for $k\in \{0,\ldots,n-1\}$ by observing that the vector $y_{i,\cdot}$ is the product of the vector $\sigma_1 U$ and a circulant matrix representing $\lambda^k\sigma_2 V$ for all $k$'s. 
Multiplication by a circulant matrix can be done in time $O(n\log n)$ using the fast Fourier transform. 
Set $x_i$ to be the average of the numbers $z_{i,k} = 1_{\geq t} y_{i,k}$ over $k\in \{0,\ldots,n-1\}$}.
\EndFor\\
\Return The average $\bar{x}$ of the $x_i$'s
\end{algorithmic}
\end{algorithm}

The idea behind Algorithm \ref{alg:A} is simple: If the $z_{i,k}$ were independent for different $k$, the algorithm would have the same variance as the naive Monte-Carlo algorithm but would do $n$ sampling steps in time $O(n\log n)$. 
We thus need to prove that the $z_{i,k}$ for different $k$'s have sufficiently small covariance that an application of Chebyshev's inequality yields the correctness of the algorithm. 
More precisely, we need to show that the variance of $x_i$ is $O(n^{-1})$, as will be proved in Theorem \ref{th:covariance}.

\begin{theorem}
Let $p = \Prob(\sigma U\cdot V\geq t)$. Let $\bar{x}$ be computed as in Algorithm \ref{alg:A}. The probability that $|\bar{x}-p|>\epsilon\cdot \sqrt{p}$ is smaller than $\delta$. 
\end{theorem}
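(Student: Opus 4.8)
The plan is to establish the result via Chebyshev's inequality, which requires controlling the variance of the final estimator $\bar{x}$. Since $\bar{x}$ is an average of $N = \lceil C/(\delta n \epsilon^2)\rceil$ independent copies $x_i$ (the permutations $\sigma_1, \sigma_2$ are drawn independently across iterations), we have $\Var(\bar{x}) = \Var(x_i)/N$. The first step is to verify that each $x_i$ is an unbiased estimator of $p$, i.e.\ $\E[x_i] = p$. This should follow because for fixed $\sigma_1, \sigma_2$, the quantity $\sigma_1 U \cdot \lambda^k \sigma_2 V$ ranges over cyclic shifts, and averaging over the random $\sigma_1, \sigma_2$ together with the cyclic index $k$ reproduces a draw of $\sigma U \cdot V$ from the uniform distribution on $\mathbb{S}_n$; I would check that the composition of a uniform permutation with a uniformly chosen cyclic shift is again uniform on the relevant orbit so that $\E[z_{i,k}] = p$ for every $k$.

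The crux is bounding $\Var(x_i)$. Writing $x_i = \frac{1}{n}\sum_{k=0}^{n-1} z_{i,k}$ with each $z_{i,k}$ a Bernoulli-type indicator, we expand
\begin{equation*}
\Var(x_i) = \frac{1}{n^2}\sum_{k,k'} \Cov(z_{i,k}, z_{i,k'}) = \frac{1}{n}\Var(z_{i,0}) + \frac{1}{n^2}\sum_{k\neq k'}\Cov(z_{i,k},z_{i,k'}).
\end{equation*}
The diagonal term contributes at most $p/n$ since $\Var(z_{i,0}) \leq p$. The essential task is to show that the off-diagonal covariance sum is also $O(np)$ (or at least small enough), so that $\Var(x_i) = O(p/n)$. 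This is precisely the claim the excerpt attributes to Theorem~\ref{th:covariance}, so I would invoke that covariance bound directly: assuming it yields $\Var(x_i) = O(p/n)$, I then get $\Var(\bar{x}) = O(p/(nN)) = O(\delta \epsilon^2 p)$ after substituting the chosen value of $N$, where the constant $C$ is tuned to absorb the implicit constant in the covariance bound.

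With the variance controlled, the final step is a routine application of Chebyshev:
\begin{equation*}
\Prob\bigl(|\bar{x} - p| > \epsilon\sqrt{p}\bigr) \leq \frac{\Var(\bar{x})}{\epsilon^2 p} \leq \frac{O(\delta\epsilon^2 p)}{\epsilon^2 p} = O(\delta),
\end{equation*}
and I would choose $C$ so that the right-hand side is at most $\delta$. I expect the genuine difficulty to lie entirely in the covariance estimate underlying Theorem~\ref{th:covariance} rather than in this theorem's proof: establishing that the cyclically shifted dot products $z_{i,k}$ are nearly uncorrelated is what the representation-theoretic machinery of the later sections (the dimension and character bounds for $\mathbb{S}_n$) is designed to deliver. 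The unbiasedness check and the Chebyshev bound are mechanical; the whole weight of the argument rests on the claim that introducing the FFT-based cyclic shifts does not inflate the variance beyond a constant factor relative to $n$ truly independent samples.
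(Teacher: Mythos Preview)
Your plan is correct and mirrors the paper's own proof essentially step for step: expand $Var(x_i)$ into diagonal and off-diagonal terms, bound the diagonal by $p/n$ via $Var(z_{i,k})=p(1-p)\le p$, invoke Theorem~\ref{th:covariance} for the off-diagonal sum to obtain $Var(x_i)=O(p/n)$, divide by the number of iterations, and finish with Chebyshev. The only addition in your sketch is the explicit unbiasedness check $\E[z_{i,k}]=p$, which the paper uses without comment; it is indeed immediate since for any fixed $k$ the map $(\sigma_1,\sigma_2)\mapsto \sigma_2^{-1}\lambda^{-k}\sigma_1$ is uniform on $\mathbb{S}_n$.
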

\begin{proof}
For $i \in \{1,\dots, \ceil{\frac{C}{\delta n\epsilon^2}}\}$, $x_i = \frac{\sum_{k=1}^n z_{i,k}}{n}$. 
So that $$Var(x_i) = \frac{\sum_{k=1}^n Var(z_{i,k}) + 2\sum_{1\leq j<k\leq n} Cov(z_{i,k},z_{i,j})}{n^2}.$$ 
By Theorem \ref{th:covariance}, $\frac{\sum_{1\leq j<k\leq n} Cov(z_{i,k},z_{i,j})}{n^2} = O(\frac{p}{n})$. 
Since the $z_{i,k}$'s have values in $\{0,1\}$ and expectation $p$, $Var(z_{i,k})=p(1-p)$. 
We deduce that $Var(x_i) = O(\frac{p}{n})$ and thus, since the $x_i$'s are independent, we can choose $C$ such that $Var(\bar{x})\leq C\frac{p}{n\ceil{\frac{C}{\delta n\epsilon^2}}}\leq\epsilon^2p\delta$. 
By Chebyshev's inequality, $\Prob(|\bar{x}-p|>t)\leq\frac{Var(\bar{x})}{t^2}$. Substituting, 
our upper bound for the variance and setting $t = \epsilon\cdot \sqrt{p}$, we get 
$\Prob(|\bar{x}-p|>t)\leq \delta$.
\end{proof}

The dependence of Algorithm \ref{alg:A} on the failure probability $\delta$ is linear: the running time is $O(\frac{n}{\delta\epsilon^2})$. 
It can be improved to $O(\log(\delta^{-1})\frac{n}{\epsilon^2})$ by running Algorithm \ref{alg:A}  $O(\log(\delta^{-1}))$ times with a small constant probability of failure and taking the median of the estimates. This is the classical ``median trick''.

\section{The covariance bound}
\label{section:covariancebound}
We denote by $[\ldots,a_i^{b_i},\ldots]$ the conjugacy class of permutations with $b_i$ cycles of length $a_i$.

\begin{theorem}
\label{th:covariance}
Let $p=\Prob(\sigma U\cdot V\geq t)$ for $\sigma$ a random permutation. 
Let $U,V$ be two vectors of length $n$ and $t\in \mathbb{R}$. 
Let $\sigma_1,\sigma_2$ be random permutations. 
We use $\lambda$ to denote a fixed permutation with one cycle of length $n$. 
Let $k$ be a uniform random variable over $\{1,\ldots,n-1\}$. 
If $\sigma_1,\sigma_2$ and $k$ are independent, then 
\begin{equation} \label{eq:main}
Cov(1_{\geq t}(\sigma_1 U\cdot\sigma_2 V),1_{\geq t}(\sigma_1 U\cdot\lambda^k\sigma_2 V)) = O(\frac{p}{n}).
\end{equation}
\end{theorem}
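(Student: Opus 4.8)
The plan is to diagonalize the covariance with Fourier analysis on $\mathbb{S}_n$, reducing it to a single average of irreducible characters over the powers of the $n$-cycle $\lambda$; the whole difficulty then becomes a representation-theoretic estimate for that average. First I would move everything onto the group. Permutation matrices are orthogonal, so $\sigma_1 U\cdot\sigma_2 V=U\cdot\sigma_1^{-1}\sigma_2 V$ and $\sigma_1 U\cdot\lambda^k\sigma_2 V=U\cdot\sigma_1^{-1}\lambda^k\sigma_2 V$. Setting $f(\rho):=1_{\geq t}(U\cdot\rho V)$, a $\{0,1\}$-valued function on $\mathbb{S}_n$, the two indicators become $f(\sigma_1^{-1}\sigma_2)$ and $f(\sigma_1^{-1}\lambda^k\sigma_2)$. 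Since $\sigma_1,\sigma_2$ are independent and uniform, $\sigma_1^{-1}\sigma_2$ is uniform, both indicators have mean $p$, and the covariance equals $\E[f(\sigma_1^{-1}\sigma_2)\,f(\sigma_1^{-1}\lambda^k\sigma_2)]-p^2$. Writing $\rho=\sigma_1^{-1}\sigma_2$ and $g=\sigma_1^{-1}\lambda^k\sigma_1$, the second argument is $g\rho$; because $\sigma_1$ and $\rho$ are independent and uniform, conditional on $k$ the element $g$ is uniform on the conjugacy class $C_k$ of $\lambda^k$ and independent of $\rho$. Hence, conditional on $k$, the expected product of the two indicators is the autocorrelation $\E_{\rho,\,g\in C_k}[f(\rho)f(g\rho)]$.

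Next I would expand in the irreducible representations $\mu$ of $\mathbb{S}_n$, with matrices $R_\mu$, dimensions $d_\mu$, characters $\chi_\mu$, and Fourier coefficients $\hat f(\mu)=\sum_\rho f(\rho)R_\mu(\rho)$. By Schur's lemma the average of $R_\mu$ over a conjugacy class is scalar, $\frac1{|C_k|}\sum_{g\in C_k}R_\mu(g)=\frac{\chi_\mu(C_k)}{d_\mu}I$, and together with the Plancherel formula this gives $\E_{\rho,\,g\in C_k}[f(\rho)f(g\rho)]=\frac{1}{(n!)^2}\sum_\mu\chi_\mu(C_k)\,\|\hat f(\mu)\|^2$, with $\|\cdot\|$ the Frobenius norm. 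The trivial representation contributes exactly $p^2$, which cancels, and averaging over the uniform $k\in\{1,\dots,n-1\}$ yields
\[
Cov=\frac{1}{(n!)^2}\sum_{\mu\neq\mathrm{triv}} w_\mu\,\|\hat f(\mu)\|^2,\qquad w_\mu:=\frac{1}{n-1}\sum_{k=1}^{n-1}\chi_\mu(\lambda^k).
\]
Plancherel also supplies the total budget $\frac{1}{(n!)^2}\sum_{\mu\neq\mathrm{triv}}d_\mu\|\hat f(\mu)\|^2=p(1-p)$, so it suffices to control the ratios $w_\mu/d_\mu$.

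The crux, and the step I expect to be hardest, is bounding $w_\mu$. Since $\sum_{k=0}^{n-1}\chi_\mu(\lambda^k)=n\,m_\mu$, where $m_\mu$ is the multiplicity of the trivial character of the cyclic group $\langle\lambda\rangle$ inside $\mu$, we have $w_\mu=(n\,m_\mu-d_\mu)/(n-1)$, and the generic goal is $m_\mu=O(d_\mu/n)$, i.e. near-equidistribution of $\mu$ among the $n$ characters of $\langle\lambda\rangle$. I would try to establish this from character estimates for powers of an $n$-cycle: when $\gcd(k,n)$ is small the permutation $\lambda^k$ consists of few long cycles and has a tiny normalized character, while the exponents $k$ with large $\gcd(k,n)$ are too sparse to spoil the average. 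Were $|w_\mu|=O(d_\mu/n)$ to hold for every nontrivial $\mu$, the proof would finish at once, since then $|Cov|\le\max_\mu(|w_\mu|/d_\mu)\cdot p(1-p)=O(p/n)$.

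The genuine obstacle is that this uniform bound fails for a short list of low-dimensional representations, most visibly the sign representation when $n$ is odd, where $\lambda$ acts trivially, $m_{\mathrm{sign}}=1$ and $w_{\mathrm{sign}}=1$, leaving an uncontrolled term of size $(\E[f\cdot\mathrm{sign}])^2$. For these exceptional $\mu$ I would abandon the weight bound and instead bound the Fourier mass $\|\hat f(\mu)\|^2$ directly, using that $F(\rho)=U\cdot\rho V$ is a linear combination of matrix coefficients of the defining permutation representation $\mathbb{C}^n\cong\mathrm{triv}\oplus\mathrm{std}$, hence has Fourier support only on the trivial and standard representations. Concretely, for the sign case the pairing $\rho\leftrightarrow\rho\tau$ by a transposition $\tau=(a\,b)$ flips the sign, so $\E[f\cdot\mathrm{sign}]=\tfrac12\E_\rho[(f(\rho)-f(\rho\tau))\,\mathrm{sign}(\rho)]$, whose absolute value is at most $\tfrac12\Prob_\rho[f(\rho)\neq f(\rho\tau)]$, the chance that $F$ crosses the threshold $t$ when components $a,b$ of $V$ are swapped; choosing $a,b$ to be the closest pair of $V$-entries makes this crossing gap $O(\mathrm{range}(V)/n)$, and the resulting sensitivity estimate should give $(\E[f\cdot\mathrm{sign}])^2=O(p/n)$, with an analogous ``checkerboard'' estimate handling the remaining exceptional representations. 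Combining the generic character bound with these structural bounds yields $Cov=O(p/n)$.
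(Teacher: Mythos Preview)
Your Fourier-analytic setup is correct and matches the paper's: rewrite the covariance via Plancherel, cancel the trivial representation against $p^2$, and bound the remainder by $\max_\mu |\chi_\mu|/d_\mu$ times the total mass $p(1-p)$. The paper, rather than averaging over $k$, fixes $k$ and simply discards the at most three values $k\in\{n/2,n/3,2n/3\}$ (their contribution is trivially $\le 3p/n$); for every remaining $k$ the cycle type of $\lambda^k$ is $[a^{n/a}]$ with $a\ge 4$, and then Fomin--Lulov together with a dimension lower bound $d_\rho\ge n^2/3$ (for $\rho\notin\{1,\Lambda,(n-1,1),(n-1,1)^T\}$) gives $|\chi_\rho|/d_\rho\le 3/n$. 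The two low-dimensional hooks $(n-1,1),(n-1,1)^T$ are dispatched directly by Murnaghan--Nakayama ($|\chi|=1$, $d=n-1$), so they need no Fourier-mass argument; the only genuinely exceptional representation is the sign.

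The gap is in your treatment of the sign term. Your pairing $\rho\leftrightarrow\rho\tau$ with $\tau=(a\,b)$ gives $|\E[f\cdot\mathrm{sign}]|\le\tfrac12\Prob[f(\rho)\neq f(\rho\tau)]$, and you then need this crossing probability to be $O(1/n)$, since combining with the trivial bound $|\E[f\cdot\mathrm{sign}]|\le p$ you only get the product $p\cdot q$. But for generic $U,V$ (e.g.\ $U=V=(1,\dots,n)$), the change $|F(\rho)-F(\rho\tau)|$ is of order $\mathrm{range}(U)\cdot|V_a-V_b|\approx \mathrm{range}(U)\,\mathrm{range}(V)/n$, while the standard deviation of $F(\rho)=U\cdot\rho V$ is only of order $\mathrm{range}(U)\,\mathrm{range}(V)/\sqrt{n}$; hence the crossing probability is generically $\Theta(1/\sqrt n)$, not $O(1/n)$, and for adversarial (highly concentrated) inputs it need not be small at all. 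The paper's substitute is purely combinatorial and avoids any anti-concentration: after sorting both $U$ and $V$ increasingly, the super-level set $S=\{\sigma:\sigma U\cdot V\ge t\}$ is an \emph{upper set} in the factorial lattice $\prod_{k=1}^n I_k\hookrightarrow\mathbb Z^n$, and any upper set in this product has discrepancy at most $n!/n$ (project along the longest coordinate direction; each fiber contributes at most $1$). This yields $|\widehat{1_S}(\Lambda)|\le n!/n$, hence $(\E[f\cdot\mathrm{sign}])^2\le p\cdot(1/n)$, which is exactly what is needed.
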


\subsection{Intuition behind the theorem and its proof}

We call {\it trivial} the largest eigenvalue of the adjacency matrix of a graph. It is always equal to the degree of the graph.
We rewrite $Cov(1_{\geq t}(\sigma_1 U\cdot\sigma_2 V),1_{\geq t}(\sigma_1 U\cdot\lambda^k\sigma_2 V))$ as $Cov(1_{\geq t}(\sigma_1 U\cdot V),1_{\geq t}(\sigma_1 U\cdot\sigma_2^{-1}\lambda^k\sigma_2 V))$. Let's fix $k$. The permutation $\sigma_2^{-1}\lambda^k\sigma_2$ is a random element of the conjugacy class of $\lambda^k$. Let $G$ be the Cayley graph on $\mathbb{S}_n$ with generators the conjugates of $\lambda^k$. Let $S$ be the subset of vertices $\sigma$ of $G$ for which $(\sigma U\cdot V)\geq t$. We could try to bound $Cov(1_{\geq t}(\sigma_1 U\cdot V),1_{\geq t}(\sigma_1 U\cdot\sigma_2^{-1}\lambda^k\sigma_2 V))$ by establishing that $G$ is edge expanding for $S$ and its translates. This would follow from the expander mixing lemma, if all non-trivial eigenvalues of $G$ were in an appropriate interval. This almost works: all eigenvalues of (the adjacency matrix of) $G$ except $-|[\lambda^k]|$\footnote{Minus the size of the conjugacy class of the k-th power of $\lambda$} are in the correct interval, for most values of $k$.

We show that the bad values of $k$ are sufficiently few and that the set $S$ is sufficiently uncorrelated with the eigenvector corresponding with the inconvenient eigenvalue $-|[\lambda^k]|$ that an argument inspired by the proof of the expander mixing lemma is enough to deduce our theorem. It is known since \cite{Lovasz75} that the eigenvalues of a Cayley graph are related to representations of its group and it will thus not be a surprise that bounds on characters play an essential role in the proof. The correspondence between eigenvalues of the graph and representations is especially simple when, as is the case here, the generating set of the Cayley graph is closed by conjugation \cite{Diaconis81}.

 The rest of the present section contains the core of the proof. Section \ref{section:checkerboard} contains a combinatorial argument which allows us to deal with the problematic eigenvalue $-|[\lambda^k]|$ or, equivalently, the alternating representation. Sections \ref{section:dimensionbound} and \ref{section:characterbound} prove bounds, on dimensions and characters of representations of the symmetric group, that are used in the proof.

\begin{proof}
We can assume $k\neq \frac{n}{2},\frac{n}{3},\frac{2n}{3}$. 
Indeed, the contributions of those at most three possible $k$'s is bounded by $Cov_{k\in\{\frac{n}{2},\frac{n}{3},\frac{2n}{3}\}}(1_{\geq t}(\sigma_1 U\cdot\sigma_2 V),1_{\geq t}(\sigma_1 U\cdot\lambda^k\sigma_2 V))\Prob(k\in\{\frac{n}{2},\frac{n}{3},\frac{2n}{3}\})\leq\frac{3p}{n}$. The inequality holds because the covariance is bounded by $p$. 

We observe that $Cov(1_{\geq t}(\sigma_1 U\cdot\sigma_2 V),1_{\geq t}(\sigma_1 U\cdot\lambda^k\sigma_2 V))$ can be rewritten $$Cov(1_{\geq t}(\sigma_2^{-1}\sigma_1 U\cdot V),1_{\geq t}(\sigma_2^{-1}\sigma_1 U\cdot\sigma_2^{-1}\lambda^k\sigma_2 V))$$ and then $$Cov(1_{\geq t}(\sigma_1 U\cdot V),1_{\geq t}(\sigma_1 U\cdot\sigma_2^{-1}\lambda^k\sigma_2 V)).$$ This follows from the fact that for independent and uniformly distributed random variables $\sigma,\sigma'$ over $\mathbb{S}_n$, the pairs $(\sigma,\sigma')$ and $(\sigma,\sigma\sigma')$ obey the same distribution.
We use the representation theory of the symmetric group to bound the left hand side of Equation (\ref{eq:main}). 
If we write $X_0=1_{\geq t}(\sigma_1 U\cdot V)$, $X'_k=1_{\geq t}(\sigma_1 U\cdot\sigma_2^{-1}\lambda^k\sigma_2 V)$ and $X_k=\E_{\sigma_2}X'_k$, we have:
\begin{equation} \label{eq:cov}
Cov(X_0,X'_k) = \E X_0 X'_k -\E X_0 \E X'_k 
\end{equation}
\begin{equation} \label{eq:cov2}
=\frac{1}{n!}\sum_{\sigma\in\mathbb{S}_n} X_0(\sigma)X_k(\sigma)-p^2 = \frac{1}{n!^2}\sum_{\rho\in\Rep(\mathbb{S}_n)} d_\rho \Tr(\hat{X_0}^*(\rho)\hat{X_k}(\rho)) - p^2
\end{equation}
where $\hat{X_0}$ and $\hat{X_k}$ are the Fourier transform\footnote{The Fourier transform of $f:\mathbb{S}_n\rightarrow \mathbb{R}$ is $\hat{f}(\rho)=\sum_{\sigma\in\mathbb{S}_n}f(\sigma)\rho(\sigma)$ where $\rho$ is a representation of $\mathbb{S}_n$.} of $X_0$ and $X_k$ regarded as functions over the permutations: $X_0(\sigma) = 1_{\geq t}(\sigma U\cdot V)$ and $X_k = X_0 * \frac{1_{\in [a^{\frac{n}{a}}]}}{|[a^{\frac{n}{a}}]|}$  with $a=\frac{n}{gcd(n,k)}$. The symbol $*$ denotes the convolution and $1_{\in [a^{\frac{n}{a}}]}$ represents the indicator function on the set of conjugates $[a^{\frac{n}{a}}]$.
The last equality is the Plancherel formula for the discrete Fourier transform on finite groups. 
Note that the character of a permutation and of its inverse are equal, because they are conjugate. This allows small simplifications in some of our formulas.
Since class functions transform to multiple of the identity, $$\frac{\widehat{1_{\in [a^{\frac{n}{a}}]}}(\rho)}{|[a^{\frac{n}{a}}]|} = \frac{\chi_{\rho}([a^{\frac{n}{a}}])}{d_\rho}I,$$ where $\chi_\rho$ is the character of the representation $\rho$ and $I$ is the identity matrix.

Because convolution corresponds to matrix multiplication of the Fourier transforms, we have: $$\hat{X_k}(\rho)=\frac{\hat{X_0}(\rho)\widehat{1_{\in [a^{\frac{n}{a}}]}}(\rho)}{|[a^{\frac{n}{a}}]|}=\frac{\hat{X_0}(\rho)\chi_{\rho}([a^{\frac{n}{a}}])}{d_\rho}.$$
 
We can then write:
\begin{equation} 
Cov(X_0,X'_k) = \sum_{\rho} \Tr(\hat{X_0}^*(\rho)\hat{X_0}(\rho))\frac{\chi_{\rho}([a^{\frac{n}{a}}])}{n!^2} - p^2
\end{equation}
For the trivial representation, we have $$\Tr(\hat{X_0}^*(1)\hat{X_0}(1))\frac{\chi_{1}([a^{\frac{n}{a}}])}{n!^2}=p^2,$$ because $\hat{X_0}(1) = pn!\cdot I$ and $\chi_1(\sigma)=1$ for any permutation $\sigma$. 

We also have $p\geq Cov(X_0,X_0)=\sum_{\rho\neq 1} d_\rho \Tr(\hat{X_0}^*(\rho)\hat{X_0}(\rho))\frac{1}{n!^2}$. 
This expresses $p$ as a sum of non-negative terms. Denoting by $\Lambda$ the alternating representation, we deduce that 
\begin{align}
Cov(X_0,X'_k) &= \Tr(\hat{X_0}^*(\Lambda)\hat{X_0}(\Lambda))\frac{1}{n!^2} + \sum_{\rho\notin 1,\Lambda} \Tr(\hat{X_0}^*(\rho)\hat{X_0}(\rho))\frac{\chi_{\rho}([a^{\frac{n}{a}}])}{n!^2}\\
 &\leq \Tr(\hat{X_0}^*(\Lambda)\hat{X_0}(\Lambda))\frac{1}{n!^2} + \max_{\rho\notin 1,\Lambda} \frac{|\chi_{\rho}([a^{\frac{n}{a}}])|}{d_\rho}\sum_{\rho\notin 1,\Lambda} d_\rho \Tr(\hat{X_0}^*(\rho)\hat{X_0}(\rho))\frac{1}{n!^2} \\
&\leq \Tr(\hat{X_0}^*(\Lambda)\hat{X_0}(\Lambda))\frac{1}{n!^2}+
p \max_{\rho\notin 1,\Lambda} \frac{|\chi_{\rho}([a^{\frac{n}{a}}])|}{d_\rho}.\label{eq:pmax}
\end{align}
By Theorem \ref{thm:charachterbound}, when $a\geq 4$, 
\begin{equation}
\label{eq:charachterbound}
\max_{\rho\notin 1,\Lambda} \frac{|\chi_{\rho}([a^{\frac{n}{a}}])|}{d_\rho}\leq \frac{3}{n}
\end{equation}
Therefore, the second summand of (\ref{eq:pmax}) is $O(\frac{p}{n})$.

It remains to bound the term coming from the alternating representation. Since the alternating representations takes values in $\{-1,1\}$,

\begin{equation}
|\Tr(\hat{X_0}^*(\Lambda)\hat{X_k}(\Lambda))\frac{1}{n!^2}|=  |\Tr(\hat{X_0}^*(\Lambda)\hat{X_0}(\Lambda))\frac{\chi_{\Lambda}([a^{\frac{n}{a}}])}{n!^2}| = |\Tr(\hat{X_0}^*(\Lambda)\hat{X_0}(\Lambda))\frac{1}{n!^2}| 
\end{equation}

On the one hand, $|\hat{X_0}(\Lambda)|\leq |X_0|=n!p$.
On the other hand, assuming without loss of generality that $U$ and $V$ have components in increasing order, we can use Lemma \ref{lem:isdownset} and \ref{lem:towers2} to deduce $|\hat{X_0}(\Lambda)|\leq \frac{n!}{n}$. Therefore,

\begin{equation}
\label{eq:checkerboardbound}
\Tr(\hat{X_0}^*(\Lambda)\hat{X_k}(\Lambda))\frac{1}{n!^2}\leq \frac{p}{n}.
\end{equation}

Taken together, the estimates  (\ref{eq:charachterbound}), (\ref{eq:checkerboardbound}) and the chain of inequalities ending in Equation (\ref{eq:pmax}) imply our theorem.
\end{proof}

The actual constant hidden in the big-O notation in the statement of the theorem can be shown to be small. It could even be $0$, we did not find any pair $U,V$ for which the covariance was positive.

\section{Upper sets on checkerboards}
\label{section:checkerboard}
\begin{definition} 
The {\it discrepancy} of a subset $S$ of $\mathbb{Z}^n$ is the difference between the number of points in $S$ with even sum of coordinates and the number of points in $S$ with odd sum of coordinates.
\end{definition}
\begin{definition} 
We write a permutation $\sigma$ of $n$ as $(a_1,\ldots ,a_n)$, meaning that $\sigma(a_i)=i$. The {\it factorial lattice} is a partial order on the permutations of size $n$. It is the transitive closure of the relation $\prec$, where $a\prec b$ for $a=(a_1,\ldots ,a_n)$ and $b=(b_1,\ldots ,b_n)$ if and only if there exist $i$ and $j$ with $i<j$ such that $a_i>a_j$, $a_i=b_j$, $a_j=b_i$, for $ k \notin \{i,j\}$ $a_k=b_k$, and for $i<k<j$, $a_k>a_i$.
\end{definition}

\begin{lemma}
\label{lem:factorial}
The factorial lattice is the product of the totally ordered sets on $k$ elements $I_k$ for $k\in\{1\ldots n\}$. 
\end{lemma}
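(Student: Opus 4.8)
The plan is to exhibit an explicit bijective ``code'' that identifies each permutation with a point of the product $\prod_{m=1}^n I_m$ and to check that this bijection carries the covering relation $\prec$ onto the covering relation of the product. Writing a permutation as the word $a=(a_1,\ldots,a_n)$, let $\pi(m)$ denote the position of the value $m$ in this word and set
\[
c_m = \#\{\ell : \ell<m \text{ and } \pi(\ell)<\pi(m)\},
\]
the number of values smaller than $m$ lying to the left of $m$. Since there are exactly $m-1$ values below $m$, we have $c_m\in\{0,\ldots,m-1\}$, so $\Phi(a)=(c_1,\ldots,c_n)$ takes values in $I_1\times\cdots\times I_n$ (with $c_1\equiv 0$). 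This is the classical inversion-table encoding, and it is a bijection: one reconstructs the word by inserting the values $1,2,\ldots,n$ in turn, the datum $c_m$ fixing the position of $m$ relative to the already placed smaller values. I will use $\Phi$ as the candidate isomorphism.

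First I would recall that in any finite poset the order is the reflexive--transitive closure of its covering relation, and that the covers of $\prod_{m=1}^n I_m$ are exactly the steps that increase a single coordinate by $1$. Hence it suffices to prove that $\Phi$ matches covers with covers: (i) every move $a\prec b$ increases exactly one coordinate of $\Phi$ by $1$, and (ii) every admissible unit increment of $\Phi(a)$ is realized by a move $a\prec b$. For (i), consider a move swapping the larger value $w=a_i$ with the smaller value $v=a_j$ at positions $i<j$, all intermediate entries exceeding $w$. Only $w$ and $v$ change position, so only the coordinates $c_m$ that compare some position against $\pi(w)$ or $\pi(v)$ can move. The claim is that all of them stay fixed except $c_w$, which goes up by $1$. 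This is the computational heart of the argument, and I expect it to be the main obstacle: one tracks each value $m$ according to whether its (fixed) position lies left of $i$, strictly between $i$ and $j$, or right of $j$, and uses the hypothesis $a_k>w$ for $i<k<j$ to see that the gains and losses cancel in pairs for every $m\neq w$, while the single pair $(w,v)$ contributes $+1$ to $c_w$. The intermediate-block hypothesis is precisely what prevents a ``partial crossing'' from disturbing another coordinate.

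For (ii), given $a$ and a coordinate $m$ with $c_m<m-1$, there is at least one value below $m$ to the right of $m$; letting $j$ be the position of the \emph{nearest} such value and $i=\pi(m)$, the pair $(i,j)$ satisfies the conditions defining $\prec$ (the entries strictly between are all larger than $m=a_i$, by minimality of $j$), and by (i) the resulting move raises $c_m$ by $1$ and fixes the rest. Since $\Phi$ is a bijection, these moves are exactly the covers of the product, with none missing and none doubled. Combining (i) and (ii), a saturated chain of moves from $a$ to $b$ exists if and only if a saturated chain of unit increments joins $\Phi(a)$ to $\Phi(b)$, i.e.\ if and only if $\Phi(a)\le\Phi(b)$ coordinatewise. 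Passing to transitive closures, $\Phi$ is an isomorphism from the factorial lattice onto $I_1\times\cdots\times I_n$, which is the assertion of the lemma.
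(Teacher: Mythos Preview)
Your proof is correct, and it is considerably more careful than the paper's, which consists of a single sentence naming the map $l_i=\#\{j<i:a_j<a_i\}$ and nothing else.  You actually carry out the verification that the paper omits: you match covers of the factorial lattice with unit coordinate increments in the product, and your case analysis in step~(i)---especially the appeal to the hypothesis $a_k>w$ for the intermediate positions---is exactly the computation that makes this work.

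There is a genuine, and important, difference between your encoding and the paper's.  You index by \emph{value}: $c_m$ counts how many values below $m$ lie to the left of $m$.  The paper indexes by \emph{position}: $l_i$ counts how many entries to the left of position $i$ are smaller than $a_i$.  The two codes carry the same multiset of numbers (indeed $c_{a_i}=l_i$), but the labelling matters for the product structure, and in fact the position-indexed map in the paper is \emph{not} an order isomorphism onto $I_1\times\cdots\times I_n$.  For $n=3$ one has $231\prec 213$ (swap positions $2$ and $3$), yet $l(231)=(0,1,0)$ and $l(213)=(0,0,2)$ are incomparable in the product; your value-indexed code gives $c(231)=(0,0,1)<(0,0,2)=c(213)$, as it should.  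So your reindexing is not cosmetic---it is precisely what is needed to make the bijection order-preserving, and your proof repairs the paper's sketch rather than merely expanding it.
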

\begin{proof}
Associate to a permutation $(a_1,\ldots ,a_n)$ the tuple $(l_1,\ldots ,l_n)$ where $l_i$ is the number of $a_j<a_i$ with $j<i$.
\end{proof}

From this last proof, $\mathbb{S}_n$ can be viewed as embedded in $\mathbb{Z}^n$. The induced notion of discrepancy on sets of permutations does not depend on the embedding.

\begin{lemma}
\label{lem:isdownset}
For vectors with components in increasing order $U,V\in \mathbb{R}^n$ and $t\in \mathbb{R}$, the set $S=\{\sigma|\sigma U\cdot V\geq t,\sigma \in \mathbb{S}_n\}$ is an upper set in the factorial lattice.
\end{lemma}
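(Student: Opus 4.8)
The plan is to reduce the claim to a one-step monotonicity statement: each covering step of the factorial lattice can only increase the quantity $\sigma U\cdot V$. Since the factorial order is by definition the transitive closure of $\prec$, and $S$ is the superlevel set $\{\sigma : \sigma U\cdot V\ge t\}$, monotonicity along covers immediately forces $S$ to be closed upward, which is exactly the assertion that $S$ is an upper set. So it suffices to prove that $a\prec b$ implies $\sigma_a U\cdot V\le \sigma_b U\cdot V$.

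First I would record the dot product in the representation $(a_1,\dots,a_n)$ of $\sigma$ (where $\sigma(a_i)=i$). With the component-permuting action written so that $(\sigma U)_i=U_{\sigma^{-1}(i)}=U_{a_i}$, we obtain $\sigma U\cdot V=\sum_{i=1}^n U_{a_i}V_i$; the other natural convention yields the symmetric expression $\sum_i U_i V_{a_i}$ and is handled identically. Now take a cover $a\prec b$: there exist positions $i<j$ with $a_i>a_j$, and $b$ is obtained from $a$ by exchanging the values at positions $i$ and $j$, all other entries being unchanged. Only the $i$- and $j$-terms of the sum are affected, and collecting them gives
$$\sigma_b U\cdot V-\sigma_a U\cdot V=(U_{a_j}-U_{a_i})V_i+(U_{a_i}-U_{a_j})V_j=(U_{a_i}-U_{a_j})(V_j-V_i).$$
Because $U$ is non-decreasing and $a_i>a_j$, the factor $U_{a_i}-U_{a_j}$ is nonnegative; because $V$ is non-decreasing and $i<j$, the factor $V_j-V_i$ is nonnegative. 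Hence the difference is nonnegative and $\sigma_b U\cdot V\ge\sigma_a U\cdot V$.

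Finally I would conclude by transitivity: if $a\in S$, so $\sigma_a U\cdot V\ge t$, and $a\prec b$, then $\sigma_b U\cdot V\ge \sigma_a U\cdot V\ge t$, so $b\in S$; iterating along any chain of covers shows $S$ is upward closed in the transitive closure, i.e. an upper set. The only point demanding genuine care is bookkeeping: correctly translating the combinatorial covering relation into the two-term dot-product difference and matching the sign conventions (both for how $\sigma$ acts and for which of $U,V$ is permuted). It is worth remarking that the extra hypothesis ``for $i<k<j$, $a_k>a_i$'' in the definition of $\prec$ plays no role in this bound—the monotonicity holds for the transposition of \emph{any} inversion of $a$—so that condition is needed only to guarantee $\prec$ is precisely the covering relation of the lattice, not for the inequality itself.
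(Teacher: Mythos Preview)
Your argument is correct and is exactly the paper's approach, only expanded: the paper's proof is the single sentence that the transpositions generating the factorial lattice cannot decrease the scalar product, i.e.\ $\sigma\prec\sigma'$ implies $\sigma U\cdot V\le\sigma' U\cdot V$. Your explicit computation $(U_{a_i}-U_{a_j})(V_j-V_i)\ge 0$ is precisely the verification of that sentence, and your observation that the intermediate condition $a_k>a_i$ is irrelevant to the inequality is accurate.
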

\begin{proof}
The transpositions generating the factorial lattice cannot decrease the scalar product: $\sigma\prec \sigma'$ implies $\sigma U\cdot V\leq \sigma' U\cdot V$. 
\end{proof}

\begin{lemma}
\label{lem:towers1.5}
The discrepancy of an upper set is at most $\frac{n!}{n}$.
\end{lemma}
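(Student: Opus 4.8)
The plan is to exploit the product structure of the factorial lattice provided by Lemma \ref{lem:factorial}, together with the embedding $\mathbb{S}_n\hookrightarrow\mathbb{Z}^n$ described just after its proof. Under this identification $\mathbb{S}_n$ becomes the box $I_1\times I_2\times\cdots\times I_n$, a product of chains with $|I_k|=k$, where a permutation is sent to its tuple $(l_1,\ldots,l_n)$ and the discrepancy of a set $S$ is $D(S)=\sum_{x\in S}(-1)^{l_1+\cdots+l_n}$. The key idea is to single out the \emph{longest} chain $I_n$ (of length $n$, with coordinate values in $\{0,\ldots,n-1\}$) and to slice the box into fibers $\{y\}\times I_n$ indexed by the base $B'=I_1\times\cdots\times I_{n-1}$, which has exactly $|B'|=(n-1)!=n!/n$ elements.

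First I would observe that, since $S$ is an upper set for the product order, its intersection with each fiber $\{y\}\times I_n$ is an upper set of the chain $I_n$, hence equal to $\{j:j\geq t_y\}$ for some threshold $t_y\in\{0,\ldots,n\}$. Next I would compute the contribution of a single fiber to the discrepancy, which factors as $(-1)^{y_1+\cdots+y_{n-1}}\sum_{j=t_y}^{n-1}(-1)^j$. The inner sum is a partial alternating sum of terms $\pm 1$, and any such sum is equal to $0$ or $\pm 1$; hence each fiber contributes at most $1$ in absolute value. Finally, summing over all $y\in B'$ and applying the triangle inequality yields $|D(S)|\leq |B'|=n!/n$, which is the claimed bound.

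Rather than a genuine obstacle, the point to get right is the bookkeeping: verifying that the restriction of an upper set to each fiber is itself a chain-upper-set, and recognizing that the elementary estimate ``a partial alternating sum of $\pm 1$'s has absolute value at most $1$'' is all the analytic input needed. It is worth noting that the argument does not even use the coupling of the thresholds $t_y$ across different base points that the upper-set property also forces (namely that $y\le y'$ implies $t_{y'}\le t_y$); the fiberwise structure alone suffices. Finally, slicing along the longest chain is exactly what makes the base $B'$ as small as possible and produces the constant $n!/n$; cutting along any shorter chain $I_m$ would only give the weaker bound $n!/m$.
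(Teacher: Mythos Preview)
Your proof is correct and is essentially the paper's own argument written out in full detail: both single out the longest chain $I_n$, slice the factorial lattice into fibers (``columns'') over the base $I_1\times\cdots\times I_{n-1}$ of size $(n-1)!=n!/n$, and use that each fiber contributes at most $1$ in absolute value to the discrepancy. Your additional remarks---that the inter-fiber coupling of thresholds is not needed and that slicing along a shorter chain would only give $n!/m$---are accurate and clarify why this particular slicing is the right one.
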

\begin{proof}
The projection of an upper set on the space perpendicular to the $n$-th coordinate along the coordinate (the direction in which the factorial lattice is the ``longest'') has size at most $\frac{n!}{n}$. The discrepancy of a ``column'' of points projecting to the same place is at most $1$.
\end{proof}

\begin{lemma}
\label{lem:towers2}
The Fourier transform $\widehat{1_{\in S}}(\Lambda)$ of the indicator function of an upper set $S$ evaluated at the alternating representation is at most $\frac{n!}{n}$.
\end{lemma}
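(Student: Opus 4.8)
The plan is to identify the Fourier transform at the alternating representation with the discrepancy of $S$, and then invoke Lemma \ref{lem:towers1.5}. First I would unwind the definitions. The alternating representation $\Lambda$ is one-dimensional and sends a permutation $\sigma$ to its sign $\text{sgn}(\sigma)\in\{-1,+1\}$. Hence, directly from the definition of the Fourier transform,
$$\widehat{1_{\in S}}(\Lambda) = \sum_{\sigma\in\mathbb{S}_n} 1_{\in S}(\sigma)\,\Lambda(\sigma) = \sum_{\sigma\in S}\text{sgn}(\sigma).$$

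The core of the argument is to match this signed sum with the discrepancy of $S$ under the embedding $\mathbb{S}_n\hookrightarrow\mathbb{Z}^n$ coming from Lemma \ref{lem:factorial}. That embedding sends $\sigma=(a_1,\ldots,a_n)$ to its inversion table $(l_1,\ldots,l_n)$, where $l_i$ counts the $a_j<a_i$ with $j<i$. I would verify the classical identity $\sum_i l_i = \text{inv}(\sigma)$, the number of inversions, from which $\text{sgn}(\sigma)=(-1)^{\sum_i l_i}$ follows, since the sign of a permutation is the parity of its inversion count. Thus the sign of $\sigma$ is exactly $-1$ raised to the sum of the coordinates of its image in $\mathbb{Z}^n$, so that
$$\widehat{1_{\in S}}(\Lambda) = \sum_{\sigma\in S}(-1)^{\sum_i l_i} = \bigl(\#\{x\in S:\textstyle\sum_i x_i\text{ even}\}\bigr) - \bigl(\#\{x\in S:\textstyle\sum_i x_i\text{ odd}\}\bigr),$$
which is precisely the discrepancy of $S$ in the sense defined at the start of this section. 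The remark following Lemma \ref{lem:factorial} guarantees this quantity is well-defined independently of the embedding.

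With this identification in hand the conclusion is immediate: since $S$ is an upper set by hypothesis, Lemma \ref{lem:towers1.5} bounds its discrepancy by $\frac{n!}{n}$ in absolute value, giving $|\widehat{1_{\in S}}(\Lambda)|\leq\frac{n!}{n}$, as required for the use made of it in the proof of Theorem \ref{th:covariance}.

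I expect the only delicate point to be the second step, namely confirming that the sign of $\sigma$ equals the parity of the sum of coordinates in the factorial-lattice embedding. This is a standard fact relating the Lehmer code to the inversion number, but the paper's convention $\sigma(a_i)=i$ describes the permutation through its inverse in one-line notation, so I would take care to check that the inversions counted by the $l_i$ agree in parity with $\text{inv}(\sigma)$; they do, because a permutation and its inverse have the same sign. Everything else is bookkeeping.
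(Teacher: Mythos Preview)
Your proposal is correct and follows essentially the same route as the paper: identify $\widehat{1_{\in S}}(\Lambda)=\sum_{\sigma\in S}\mathrm{sgn}(\sigma)$ with the discrepancy of $S$ via the inversion-table embedding of Lemma~\ref{lem:factorial}, then invoke Lemma~\ref{lem:towers1.5}. Your added care about the convention $\sigma(a_i)=i$ and the fact that a permutation and its inverse share the same sign is exactly the kind of check the paper's ``up to a sign'' remark is gesturing at.
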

\begin{proof}
For $\sigma\in\mathbb{S}_n$, $\Lambda(\sigma)$ has absolute value $1$ and its sign corresponds to the parity of the number of inversions in $\sigma$. The embedding of $\mathbb{S}_n$ in $\mathbb{Z}^n$ of the proof of Lemma \ref{lem:factorial} implies that the discrepancy of $S$ is, up to a sign, equal to $\sum_{\sigma \in S} \Lambda(\sigma)=\widehat{1_{\in S}}(\Lambda)$. We finally apply the previous lemma.
\end{proof}

\section{Dimension estimates of representations}
\label{section:dimensionbound}

The representations of the symmetric group $\mathbb{S}_n$ are in bijection with the partitions of $n$, or equivalently, the Young diagrams with $n$ boxes. The dimension of a representation equals the number of ways to fill the associated young diagram to obtain a standard Young tableau. From the definition of standard Young tableaux, intuitively, for fixed $n$, representations of $\mathbb{S}_n$ associated with Young diagrams in which the boxes are either almost all in one column or almost all in one row should have the smallest dimension, because there is less freedom in how to construct a Young tableau inside. The next theorem makes this intuition quantitative. It is easy and probably well known in some circles, but we could not locate an appropriate reference. The first three chapters of \cite{Sagan} cover all the concepts and results related to the representation theory of the symmetric group that we use in this section and much more. 

\begin{theorem}
\label{thm:dimbound}
For $n\geq 400$, every representation $\rho$ of $\mathbb{S}_n$ has dimension $d_\rho>\frac{n^2}{3}$ except $1,\Lambda,(n-1,1),(n-1,1)^T$.
\end{theorem}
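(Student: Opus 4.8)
The plan is to translate the claim into a lower bound on the number of standard Young tableaux $d_\lambda = f^\lambda$ and to lean on two elementary structural facts. First, a partition and its conjugate have the same dimension, $d_\lambda = d_{\lambda'}$, and the four exceptional diagrams split into the two conjugate pairs $\{(n),(1^n)\}$ and $\{(n-1,1),(2,1^{n-2})\}$; so it suffices to prove the bound for \emph{wide} partitions ($\lambda_1 \ge \lambda_1'$, at least as many columns as rows), among which the exceptions are exactly $(n)$ and $(n-1,1)$. Second, dimension is monotone under inclusion of diagrams: the branching rule $d_\lambda = \sum_{\lambda^-} d_{\lambda^-}$ (sum over removable corners), iterated along a chain in Young's lattice, gives $d_\nu \le d_\lambda$ whenever $\nu \subseteq \lambda$. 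Writing $m = n - \lambda_1$ for the number of cells below the first row, the values $m=0,1$ give precisely the two wide exceptions, so I would assume $m \ge 2$ and aim at $d_\lambda > n^2/3$. Note that wideness forces $\lambda_1 \ge \sqrt n$, and that the cells below the first row form a partition of $m$ with largest part $\lambda_2$ and $\lambda_1'-1$ parts, whence $m \le \lambda_2(\lambda_1'-1)$.

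I would then bound $d_\lambda$ from below by the dimension of a well-chosen subdiagram, using three families whose dimensions are super-polynomial in the relevant parameter: the first two rows $(\lambda_1,\lambda_2)$, with $d = \binom{\lambda_1+\lambda_2}{\lambda_2} - \binom{\lambda_1+\lambda_2}{\lambda_2-1}$; the principal hook $(\lambda_1,1^{\lambda_1'-1})$, with $d = \binom{\lambda_1+\lambda_1'-2}{\lambda_1'-1} \ge 2^{\lambda_1'-1}$ (using $\lambda_1 \ge \lambda_1'$); and the Durfee square $(s^s)$, where $s$ is the largest index with $\lambda_s \ge s$, whose number of tableaux $f^{(s^s)}$ grows like $2^{\Theta(s^2)}$. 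The extreme thin shapes are handled by exact formulas: a hook $(\lambda_1,1^m)$ has $d = \binom{n-1}{m} \ge \binom{n-1}{2} > n^2/3$, and a genuine two-row shape has $d = \binom{n}{m} - \binom{n}{m-1} \ge \tfrac{n(n-3)}{2} > n^2/3$ for $m \ge 2$. For the remaining shapes the inequality $m \le \lambda_2(\lambda_1'-1)$ forces at least one of $\lambda_2$, $\lambda_1'$, or the Durfee size $s$ to be large enough that the corresponding subdiagram bound clears $n^2/3$; the Durfee square is what rescues the genuinely two-dimensional \emph{middle} shapes (e.g.\ when the cells below the first row form a roughly square block of side $\approx\sqrt m$), where the two-row and hook bounds are both too weak.

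I expect the crux to be exactly this stitching of the three regimes and the pinning of constants. The super-polynomial bounds only beat $n^2/3$ once the governing parameter exceeds a threshold growing like $\sqrt{\log n}$ (for $s$) or $\log n$ (for $\lambda_1'$), so the argument must verify that the pigeonhole leaves no gap: whenever all three parameters are below their thresholds, $m$ is forced to be small, $\lambda_1$ is forced close to $n$, and the exact two-row/hook formulas apply and give $\approx n^2/2$. Tuning these thresholds and the lower-order terms so that the single cutoff $n \ge 400$ makes every inequality simultaneously valid is the main bookkeeping; conceptually everything reduces to the monotonicity lemma together with the observation that only shapes very close to a single row (or, by conjugation, a single column) can have dimension near the floor $n^2/3$.
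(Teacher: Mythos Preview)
Your plan is sound and would yield a correct proof. The skeleton is the same as the paper's: reduce by conjugation to wide shapes, dispose of genuine hooks by the exact formula $\binom{n-1}{l}$, and for the remaining shapes bound $d_\lambda$ from below by the number of standard tableaux on a well-chosen sub-shape. The organization differs. The paper case-splits on the leg length $l=\lambda_1'-1$: for non-hooks it uses the single sub-shape ``arm $+$ leg $+$ corner $+$ the $(2,2)$ cell'', giving $d_\rho\ge\binom{a+l}{l+1}$, and then treats $l>5$, $2\le l\le 5$, and $l=1$ (the two-row case, handled with the second-row length $a_2$) separately. You instead pigeonhole on which of $\lambda_2$ or $\lambda_1'$ is large and use two sub-shape families (first two rows, principal hook), which is equally effective; your explicit use of the branching-rule monotonicity $d_\nu\le d_\lambda$ for $\nu\subseteq\lambda$ is cleaner than the paper's ad hoc tableau injections. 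One remark: the Durfee square is not actually needed. If both $\lambda_2$ and $\lambda_1'-1$ are below their thresholds then $m\le\lambda_2(\lambda_1'-1)$ is small, $\lambda_1$ is close to $n$, and for $\lambda_1'\ge 3$ the principal-hook bound $\binom{\lambda_1+\lambda_1'-2}{\lambda_1'-1}\ge\binom{\lambda_1+1}{2}$ already clears $n^2/3$; the cases $\lambda_1'\le 2$ are your exact two-row and one-row formulas. So your ``middle regime'' is empty and the third family can be dropped, which simplifies the bookkeeping you flagged as the crux.
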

\begin{proof}
Let $t_\rho$ be the Young diagram associated to $\rho$. For the rest of the proof, we assume the leg\footnote{The first column of the Young diagram minus the upper left cell in English notation.} of length $l$ of $t_\rho$ is smaller or equal than the arm\footnote{The first row of the Young diagram minus the upper left cell in English notation.} of length $a$.
If $t_\rho$ does not have a box at coordinates $(2,2)$, a simple application of the hook-length formula shows $d_\rho=\binom{n-1}{l}$. 
Since $\rho \not\in \{1,\Lambda,(n-1,1),(n-1,1)^T\}$, $l\geq 2$.
In this case, our theorem follows from the unimodality\footnote{The only mode of a row of Pascal's triangle is in the middle.} of the binomial coefficients: $\binom{n-1}{l}\geq \binom{n-1}{2} \geq \frac{n^2}{3}$ for $n\geq 9$.

If $t_\rho$ does have a box at coordinates $(2,2)$, we count the number of standard Young tableaux in the arm, the leg and the $(2,2)$ box of $t_\rho$ only. This provides a lower bound on the number of standard Young tableaux in $t_\rho$ and hence a lower bound on $d_\rho$.
The two inequalities $(a+1)(l+1)\geq n$ and $d_\rho\geq \binom{a+l}{l+1}$ hold. If $l>5$, $n<(a+1)^2$ and thus $d_\rho \geq\binom{a+l}{l+1}\geq\binom{a}{6}\geq \frac{(\sqrt{n}-1)^6}{6!}$ by unimodality of the binomial coefficients. When $n\geq 400$, $\frac{(\sqrt{n}-1)^6}{6!}\geq \frac{n^2}{3}$ and thus $d_\rho\geq \frac{n^2}{3}$. If $2\leq l \leq 5$, $a\geq\frac{n}{5}$ and therefore $d_\rho \geq\binom{a+l}{l+1}\geq\binom{a+l}{3}\geq cn^3$ for some constant $c>0$ and our bound for $d_\rho$ is established for $n$ large enough. Taking $n\geq 400$ is enough to imply $d_\rho>\frac{n^2}{3}$, but we skip this tedious computation.

For $l=1$, let $a_2$ be the number of cells directly below the arm. We have $n=a+a_2+2$. By counting standard tableaux with numbers between $\{2,\cdots,a_2+1\}$ in the cells of the arm that have a cell below themselves, we deduce $d_\rho\geq \binom{a+1}{a_2+1}$. When $a_2<\frac{a}{2}$ and $n\geq 400$ is fixed, this last lower bound on $d_\rho$ is minimal for $a_2=1$. Therefore we can take $a_2=1$ and, assuming $a\geq 19$ insures that $d_\rho\geq \frac{n^2}{3}$. When $n\geq 40$, $l=1$ implies $a\geq 19$.
Finally, if $a_2\geq\frac{a}{2}$, we have $d_\rho\geq \binom{a}{\ceil*{\frac{a}{2}}}\geq \frac{2^a}{a+1}$  (the first inequality is again established by counting a subset of standard tableaux) and $n\leq 2a+2$ so that $n\geq 42$ implies $a\geq 20$, which implies $d_\rho \geq \frac{n^2}{3}$. 
\end{proof}

\section{Bounds on Characters}
\label{section:characterbound}
\begin{lemma}
\label{lem:n-1}
For $\rho \in \{(n-1,1),(n-1,1)^T\}$ a representation of $\mathbb{S}_n$ and a conjugacy class $[r^m]$ of $m$ identical cycles of size $r$ larger than $1$, $\frac{|\chi_\rho([r^m])|}{d_\rho}\leq \frac{1}{n-1}$. 
\end{lemma}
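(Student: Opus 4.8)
The plan is to compute the character $\chi_\rho([r^m])$ explicitly for the standard representation $(n-1,1)$ and its conjugate, then divide by the known dimension. The representation $(n-1,1)$ has a very concrete realization: it is the nontrivial irreducible summand of the natural permutation representation of $\mathbb{S}_n$ on $\mathbb{R}^n$. Writing the permutation module as the direct sum of the trivial representation and $(n-1,1)$, the character of the permutation representation at a permutation $\tau$ equals $\mathrm{fix}(\tau)$, the number of fixed points of $\tau$. Since the trivial representation contributes $1$ at every element, we get $\chi_{(n-1,1)}(\tau) = \mathrm{fix}(\tau) - 1$.

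Applying this to the conjugacy class $[r^m]$, a permutation of cycle type $r^m$ with $r>1$ has no fixed points, so $\mathrm{fix} = 0$ and hence $\chi_{(n-1,1)}([r^m]) = -1$. The dimension is $d_{(n-1,1)} = n-1$ (consistent with Theorem \ref{thm:dimbound}, which excludes exactly this representation and its transpose). This immediately gives $\frac{|\chi_{(n-1,1)}([r^m])|}{d_{(n-1,1)}} = \frac{1}{n-1}$, meeting the bound with equality. For the transpose $(n-1,1)^T$, I would use the fact that tensoring with the alternating (sign) representation $\Lambda$ sends a representation indexed by a partition to the one indexed by its conjugate partition, and multiplies the character by the sign of the permutation: $\chi_{\rho^T}(\tau) = \mathrm{sgn}(\tau)\,\chi_\rho(\tau)$. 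The dimension is unchanged, $d_{(n-1,1)^T} = n-1$, and since $|\mathrm{sgn}(\tau)| = 1$ the ratio of absolute values is identical, again $\frac{1}{n-1}$.

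Because this statement requires the cycles to have size $r>1$ (guaranteeing no fixed points), the fixed-point count is forced to be zero and the character collapses to the clean value $-1$; the hypothesis $r>1$ is exactly what removes the dependence on how many parts there are. I would present the two cases together, deriving the standard-representation character from the fixed-point formula and then handling the transpose by the sign-twist identity.

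There is no real obstacle here: the only thing to be careful about is citing the fixed-point formula for the permutation character and the conjugate-partition/sign-twist relation cleanly, both of which are covered by the first three chapters of \cite{Sagan}. If one prefers to avoid even that much machinery, the character value $-1$ on fixed-point-free permutations can be read directly from the Murnaghan--Nakayama rule applied to the hook-shaped diagram $(n-1,1)$, but the fixed-point argument is shorter and more transparent.
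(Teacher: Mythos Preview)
Your proof is correct. The paper reaches the same conclusion by a slightly different route: it invokes the Murnaghan--Nakayama rule directly on the hook shapes $(n-1,1)$ and $(n-1,1)^T$ to obtain $|\chi_\rho([r^m])|=1$, and reads off $d_\rho=n-1$ from the hook-length formula. Your argument instead realizes $(n-1,1)$ concretely as the complement of the trivial summand in the permutation module, so that $\chi_{(n-1,1)}(\tau)=\mathrm{fix}(\tau)-1$, and then passes to the transpose via the sign-twist $\chi_{\rho^T}=\mathrm{sgn}\cdot\chi_\rho$. Your route is the more elementary of the two, needing only the decomposition of the permutation module rather than a recursive combinatorial rule; the paper's route is more uniform, handling both shapes with the same recipe and without appealing separately to the sign-twist. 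You even note this alternative yourself at the end, so there is nothing to correct---either version is entirely adequate here.
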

\begin{proof}
Applying the Murnaghan-Nakayama rule to the Young diagrams $(n-1,1)$ and $(n-1,1)^T$, we see that $|\chi_\rho([r^m])|=1$. Computing $d_\rho = n-1$ using the hook formula yields the claim.
\end{proof}

\begin{theorem}
For $n\geq 400$, $\rho \notin \{1,\Lambda\}$ and $n=rm$ with $r>1$, we have 
\begin{equation}
\frac{|\chi_{\rho}([r^{m}])|}{d_\rho}\leq \begin{cases}
 \frac{3}{n} & \textrm{when}\: r\geq 4 \\
 \frac{3}{n^\frac{1}{2}} & \textrm{when}\: r=2,3 
\end{cases}
\end{equation}
\label{thm:charachterbound}
\end{theorem}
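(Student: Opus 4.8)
The plan is to translate the normalized character $\chi_\rho([r^m])/d_\rho$ into the combinatorics of the $r$-quotient of the Young diagram $t_\rho$, and then to treat the representations separately according to the shape of $t_\rho$. The first step is a reduction. By the Murnaghan--Nakayama rule (already used in Lemma~\ref{lem:n-1}), $\chi_\rho([r^m]) = 0$ unless $t_\rho$ can be tiled by $m$ ribbons of length $r$, that is, unless its $r$-core is empty; when the core is nonempty the bound holds trivially. The two representations $(n-1,1)$ and $(n-1,1)^T$ are dealt with at once by Lemma~\ref{lem:n-1}, since $\frac{1}{n-1}$ is smaller than both $\frac{3}{n}$ and $\frac{3}{\sqrt n}$. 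For every remaining $\rho \notin \{1,\Lambda\}$, Theorem~\ref{thm:dimbound} supplies the lower bound $d_\rho > \frac{n^2}{3}$, which I will use below. So from now on I assume $t_\rho$ has empty $r$-core, in which case exactly $m$ of its $n$ hook lengths are divisible by $r$.

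My second step is to record the exact value of the ratio. Writing the $r$-quotient as $(t^{(0)},\dots,t^{(r-1)})$ with $\sum_i|t^{(i)}| = m$, iterated Murnaghan--Nakayama gives $|\chi_\rho([r^m])| = \binom{m}{|t^{(0)}|,\dots,|t^{(r-1)}|}\prod_i f^{t^{(i)}}$, where $f^{t^{(i)}}$ is the number of standard tableaux of shape $t^{(i)}$. Since dividing the $r$-divisible hook lengths of $t_\rho$ by $r$ reproduces exactly the multiset of hook lengths of the quotient, the hook-length formula collapses to the clean identity
\[
\frac{|\chi_\rho([r^m])|}{d_\rho} \;=\; \frac{m!\,r^m}{n!}\prod_{c\,:\,r\,\text{does not divide}\,h(c)} h(c),
\]
the product ranging over the $n-m$ cells whose hook length is not divisible by $r$. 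This is the quantity I must bound.

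The third step is two complementary estimates. On the one hand, from $f^{t^{(i)}}\le\sqrt{|t^{(i)}|!}$ (a consequence of $\sum_{\mu\vdash k}(f^\mu)^2 = k!$) and $\binom{m}{|t^{(0)}|,\dots,|t^{(r-1)}|}\le r^m$ one obtains the elementary bound $|\chi_\rho([r^m])|\le\sqrt{m!\,r^m}$. For the genuinely ``thick'' diagrams the dimension estimates inside the proof of Theorem~\ref{thm:dimbound} already force $d_\rho$ to be far larger than $n\sqrt{m!\,r^m}$, so dividing settles these cases immediately. On the other hand, for the ``thin'' diagrams near a hook and for the two-row and two-column shapes --- where $d_\rho$ is only of polynomial size --- this crude bound is useless, and I would instead read the cancellation directly off the displayed identity: a near-boundary diagram forces a degenerate $r$-quotient, hence a small multinomial and a small $\prod_i f^{t^{(i)}}$, and one estimates the product of non-divisible hooks by hand. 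The shape dichotomy here mirrors the case split (presence or absence of a box at position $(2,2)$) used for the dimension bound.

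I expect the main obstacle to be exactly this intermediate family of representations, where neither the dimension bound nor the size bound on the character is individually sufficient and the two must be balanced through the exact identity. This balancing is also where the distinction between $r\ge 4$ and $r\in\{2,3\}$ enters: when $r\ge 4$ the quotient is distributed over at least four pieces, which contributes an additional factor of order $n$ and yields $\frac{3}{n}$, whereas for $r\in\{2,3\}$ there are too few pieces to extract more than $\frac{3}{\sqrt n}$; one should moreover check that this weaker exponent cannot be improved uniformly, so that the two-case statement is unavoidable.
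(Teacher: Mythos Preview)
Your $r$-quotient setup and the hook-length identity
\(
\frac{|\chi_\rho([r^m])|}{d_\rho}=\frac{m!\,r^m}{n!}\prod_{r\nmid h(c)}h(c)
\)
are correct and are in fact the starting point of the Fomin--Lulov paper that is cited here. But the paper's proof does \emph{not} split on the shape of $t_\rho$ beyond peeling off $(n-1,1)$ and its transpose via Lemma~\ref{lem:n-1}. It simply quotes the Fomin--Lulov inequality
\(
|\chi_\rho([r^m])|\le \frac{m!\,r^m}{n!^{1/r}}\,d_\rho^{1/r},
\)
simplifies the prefactor with Stirling to $\le 3\,m^{1/2-1/(2r)}d_\rho^{1/r}$, divides by $d_\rho$, and inserts $d_\rho\ge n^2/3$ from Theorem~\ref{thm:dimbound}. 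That is the entire argument.

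The gap in your plan is exactly the ``intermediate family'' you yourself flag. Your bound $|\chi_\rho([r^m])|\le\sqrt{m!\,r^m}$ carries no $d_\rho$ on the right, so after dividing you need $d_\rho$ to exceed an exponential in $n$; the polynomial estimate $n^2/3$ is useless there, and the finer shape-dependent estimates inside the proof of Theorem~\ref{thm:dimbound} still leave uncovered, for instance, two-row shapes $(n-k,k)$ with $k$ of moderate size, where $d_\rho$ is sub-exponential yet the character is not small. What you are missing is the single inequality that turns your exact identity into a usable bound: in your notation,
\(
\prod_{r\nmid h(c)}h(c)\le\Bigl(\prod_c h(c)\Bigr)^{(r-1)/r}=(n!/d_\rho)^{(r-1)/r},
\)
which is precisely the content of Fomin--Lulov. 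This produces the crucial $d_\rho^{1/r}$ factor on the right-hand side, after which the uniform polynomial dimension bound suffices and the thick/thin/intermediate trichotomy --- together with the balancing you anticipate --- becomes unnecessary.
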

\begin{proof}
For $\rho \in \{(n-1,1),(n-1,1)^T\}$, the theorem follows from Lemma \ref{lem:n-1} and $n\geq 400$. We thus assume that $\rho$ satisfies the condition of Theorem \ref{thm:dimbound}. 
From \cite{FominLulov95}, we have $$|\chi_\rho([r^m])| \leq \frac{m!r^m}{(mr)!^{\frac{1}{r}}}d_\rho^\frac{1}{r}.$$ 
We state two inequalities from \cite{Robbins55}, which refine Stirling's formula:
$$\sqrt{2\pi}n^{n+\frac{1}{2}}e^{-n+\frac{1}{12n+1}}<n!<\sqrt{2\pi}n^{n+\frac{1}{2}}e^{-n+\frac{1}{12n}}.$$
We deduce:
$$|\chi_\rho([r^m])| \leq e^\frac{1}{12}\frac{ \sqrt{2\pi}m^{m+\frac{1}{2}}e^{-m} r^m}{ (\sqrt{2\pi}(rm)^{rm+\frac{1}{2}}e^{-rm})^{\frac{1}{r}} } d_\rho^\frac{1}{r} $$
$$ \leq e^\frac{1}{12}\sqrt{2\pi} m^{\frac{1}{2}-\frac{1}{2r}}r^{-\frac{1}{2r}} d_\rho^\frac{1}{r} $$
$$ \leq 3m^{\frac{1}{2}-\frac{1}{2r}}d_\rho^\frac{1}{r}.$$
For $n=rm$, assuming $r=2,3$, we have $$|\chi_\rho([r^m])| \leq \frac{3}{2}n^{\frac{1}{2}}d_\rho^\frac{1}{2}.$$ 
Dividing by $d_\rho$ and using Theorem \ref{thm:dimbound} to replace $d_\rho$ by $\frac{n^2}{3}$ on the right hand side, we have: 
$$\frac{|\chi_\rho([r^m])|}{d_\rho} \leq 3n^{-\frac{1}{2}} .$$
Assuming $r>3$, we have $$|\chi_\rho([r^m])| \leq n^{\frac{1}{2}}d_\rho^\frac{1}{r}.$$
Dividing by $d_\rho$, we get $$\frac{|\chi_\rho([r^m])|}{d_\rho} \leq n^{\frac{1}{2}}d_\rho^\frac{1-r}{r}.$$ 
By Theorem \ref{thm:dimbound}, $d_\rho\geq \frac{n^2}{3}$ and thus, $$\frac{|\chi_\rho([r^m])|}{d_\rho} \leq 3n^{-1}$$
\end{proof}

\section{Applications}
\label{section:application}
Our main algorithmic idea can be used to compute the p-value for any statistic of the form $f(u_1\cdot \pi(v_1),\ldots, u_k\cdot \pi(v_k))$ for a random permutation $\pi$, a $k$-ary function $f$, and tuples $v$ and $u$ of length $k$. Its domain of application can be extended further with various problem-specific ideas.

The method is, for example, a promising candidate for computing p-values for the following tests. We did not try to be exhaustive and we only sketch how our method could be applied.

\begin{enumerate}
\item Spearman's and Pearson's correlation significance test, as shown above.
\item The Mann-Whitney U test. Given two samples $X,Y\subset \mathbb{R}$, the p-value for the Mann-Whitney U test can be calculated as $\Prob(\sigma U\cdot V\geq U\cdot V)$ for $V$ a vector containing the ranks of the elements of $X$ in $X\cup Y$ followed by the ranks of the elements of $Y$, $U$ a vector containing $|X|$ zeros followed by $|Y|$ ones\footnote{This natural permutation based formula is incorrect in the presence of ties. (Working with continuous distributions is usually considered a necessary assumption for applying the Mann-Whitney U test.)
This can be seen by taking the example of two groups both with two samples taken from independent and balanced Bernoulli random variables. 
The probability of both samples of one group being $1$ and both samples of the other group $0$ is $1/16$, while there are $24$ permutations, which precludes getting a p-value with a denominator multiple of $16$.}.
Our improved sampling algorithm is expected to become competitive with exact algorithms, including \cite{NagarajanKeich09}, when the sizes of the groups are sufficiently large.
\item The Kruskal-Wallis one-way analysis of variance for a small number $k$ of groups: This test generalizes the previous example and our method can be applied in a similar way. Samples of the test statistics are obtained using a vector $R$ containing the ranks of the observations of all the groups and 0-1 vectors $X_k$ representing the groups. The sampled test statistic can be computed from $\sigma R\cdot X_i$ for $i\in\{1,\ldots,k\}$ for the random permutation $\sigma$. 
\item The Wilcoxon signed-rank test (possibly with ties). The problem reduces to evaluating $\Prob(\Sigma_{i\in \{1,\ldots,n\}}i\cdot b_i\geq t)$ for balanced independent Bernoulli random variables $b_i$ and a threshold $t\in \mathbb{R}$. This evaluation can be approximated closely by computing $\Prob(\sigma U\cdot V\geq t)$ for $U$ a balanced 0-1 vector of length $c\cdot n$ for a small constant $c$ and $V$ a vector containing the ranks or equivalently, in the absence of ties, the integers $\{1,\ldots,n\}$ and a padding by zeroes. The approximation can be refined by using randomly slightly unbalanced 0-1 vectors for $U$. This approach probably gives a more efficient reduction of the problem of counting solutions to the 0-1 knapsack problem to the computing of correlation coefficient p-values.

\item A permutation test for distance correlation\footnote{See \cite{Szekely09} or \texttt{https://en.wikipedia.org/wiki/Distance\textunderscore correlation} for more about this beautiful new measure of dependence.} can also be defined and our method could be used to compute its significance. The basic idea is the following: If, for $i\in \{1,\ldots,n\}$, $U_i\in \mathbb{A},V_i\in \mathbb{B}$ are pairs of samples of two random variables $U$ and $V$ from metric spaces $(A,d_A)$, and $(B,d_B)$, under some conditions on the metrics, the correlation distance between the two random variables is, up to normalizations, given by $\Tr(M_U^TM_V)$ for $M_U,M_V\in \mathbb{R}^{n\times n}$, ``centered'' distance matrices of the samples. P-values for a test for the independence of $U$ and $V$ can then be obtained by comparing the correlation distance with the distribution of $\Tr(M_U^*\sigma(M_V))\cdot c$ for a normalization $c$ and $\sigma$ a random permutation acting on matrices by permuting their rows and their columns. We can then express $\Tr(M_U^*\sigma(M_V))\cdot c$ as a sum of the form $c\cdot \Sigma_{i\in \{1,\ldots,n\}} x_i\cdot \sigma(y_i)$ where $x_i$ and $y_i$ are vectors containing elements of the matrices $M_U$ and $M_V$. Our trick based on the the fast Fourier transform can then be applied.
\end{enumerate}

 In this article, we proved that our method provides a speed up for permutation tests based on Pearson's (and hence also Spearman's) correlation. Our results extend without difficulties to the Mann-Whitney U test and the Kruskal-Wallis test. For more complicated cases, it is likely that an important speed up occurs in practice on large problems, but the proof of a good complexity bound might require more work and could be the object of further publications.

\section{Conservative p-values with sampling tests}
\label{section:conservativeness}
An approximation to a p-value is usually not a p-value, see \cite{North09} for details. 
Algorithm \ref{alg:A}, which can be used to compute an approximate p-value, is modified to return proper (but non-deterministic) p-values. The result is Algorithm \ref{alg:B}. The p-values returned by Algorithm \ref{alg:B} are also approximations to the exact p-value of the permutation test and have therefore similar power. Naturally, to the extent that we can speak of the power of an approximation of a p-value, the result of Algorithm \ref{alg:A} also have similar power to the exact p-value.

\begin{algorithm}
\caption{Proper p-value}
\label{alg:B}
\begin{algorithmic}

\Require{Two vectors $U,V\in \mathbb{R}^n$, a number of iterations $i_{max}$}
\Ensure{A nondeterministic p-value for the null hypothesis $H_0$ that $U$ and $V$ are two sets of independent and identically distributed samples from two $1$-dimensional distributions $\mathbb{U}$ and $\mathbb{V}$}
\State{Set $t=U\cdot V$}
\State{Pick a random conjugate $\alpha$ of the long cycle $\lambda$. Set $x_0=\frac{\#\{k|U\cdot \alpha^k V\geq t  \}}{n}$.}
\For{$i \in \{1,\dots, i_{max}\}$}
        \State {Pick permutations $\sigma_1,\sigma_2$ at random. Compute $y_{i,k} = \sigma_1 U\cdot\lambda^k\sigma_2 V$ for $k\in \{0,\ldots,n-1\}$ by observing that the vector $y_{i,\cdot}$ is the product of the vector $\sigma_1 U$ and a circulant matrix representing $\lambda^k\sigma_2 V$ for all $k$'s. 
Set $x_i$ to be the average of the numbers $z_{i,k} = 1_{\geq t} y_{i,k}$ over $k\in \{0,\ldots,n-1\}$}.
\EndFor\\
\Return The average $\bar{x}$ of the $x_i$'s for $i \in \{0,\dots, i_{max}\}$
\end{algorithmic}
\end{algorithm}

Note that, as in Algorithm \ref{alg:A}, the computations of the $x_i$'s for $i \in \{0,\dots, i_{max}\}$ can be accelerated by using the fast Fourier transform. Therefore Algorithm \ref{alg:B} is as efficient as Algorithm \ref{alg:A}.

\begin{theorem}
Assuming the null hypothesis $H_0$ from Algorithm \ref{alg:B}, the algorithm returns each of the values $$\Big\{\frac{1}{n(i_{max}+1)},\ldots,\frac{n(i_{max}+1)}{n(i_{max}+1)}\Big\}$$ with the same probability. In particular, for any $\alpha\in[0,1]$, the probability under $H_0$ that the return value of Algorithm \ref{alg:B} is smaller or equal to $\alpha$ is smaller or equal to $\alpha$. In other words, the return value is a conservative p-value.
\end{theorem}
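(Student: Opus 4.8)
The plan is to recognize this as the classical statement that a Monte-Carlo permutation p-value of the form $\frac{\#\{\text{samples}\ge\text{observed}\}}{\#\text{samples}}$ is exactly uniform, once the observed statistic is itself counted among the samples. Writing $N=n(i_{max}+1)$, the returned value is $\bar x=C/N$ where $C$ is the number of pairs $(i,k)$ (over $i\in\{0,\dots,i_{max}\}$ and $k\in\{0,\dots,n-1\}$) with $U\cdot\pi_{i,k}V\ge t$, where $\pi_{0,k}=\alpha^k$ and $\pi_{i,k}=\sigma_1^{-1}\lambda^k\sigma_2$ for $i\ge1$. The crucial bookkeeping point is that $\alpha^0=e$, so the slot $(0,0)$ contributes $U\cdot V=t\ge t$ and is always counted; hence $C\in\{1,\dots,N\}$, which is exactly why the target set starts at $1/N$. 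It therefore suffices to prove that, under $H_0$ and the algorithm's randomness, $C$ is uniform on $\{1,\dots,N\}$; the conservativeness bound $\Prob(\bar x\le\alpha)\le\alpha$ is then immediate.

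I would obtain uniformity from exchangeability. Under $H_0$, $V$ is an i.i.d.\ sample, so for every fixed $h\in\mathbb S_n$ one has $(U,V)\stackrel{d}{=}(U,hV)$, and since the algorithm's randomness is independent, the statistic $W(g):=U\cdot gV$ satisfies $W(\cdot)\stackrel{d}{=}W(\cdot\,h)$. To exploit this cleanly I would condition on the order statistics of $U$ and $V$, writing $U=\rho_U u$ and $V=\rho_V v$ with $\rho_U,\rho_V$ independent uniform; then $W(g)=W_0(\rho_U^{-1}g\rho_V)$ for the fixed function $W_0(h)=u\cdot hv$, which under continuous distributions takes distinct values on distinct permutations. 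A short computation turns the $N$ sample points into $\theta_{i,k}=s_{i,k}\mu$, where $\mu=\rho_U^{-1}\rho_V$ is uniform and independent of the ``shift'' factors $s_{i,k}$, the distinguished threshold point is $\theta_{0,0}=\mu$, and the $i_{max}+1$ blocks become \emph{i.i.d.}\ cosets of random cyclic subgroups generated by $n$-cycles (block $0$ being $\langle A\rangle\mu$ and block $i$ being $\langle B_i\rangle\Gamma_i\mu$).

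From here two symmetry moves finish the argument. First, replacing $\mu$ by $A^j\mu$ (and compensating in the shifts) is a measure-preserving reparametrization that leaves the whole point multiset unchanged while moving the distinguished slot to slot $(0,j)$ of block $0$; averaging over $j\in\{0,\dots,n-1\}$ shows $C$ has the same law as the count taken with the threshold at a \emph{uniformly random slot of block $0$}. Second, because the blocks are i.i.d.\ they are exchangeable, so the threshold slot may be moved to a uniformly random slot among all $N$ slots. The count of samples at least as large as a uniformly random sample is exactly the rank of a uniform element, which is uniform on $\{1,\dots,N\}$.

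The hard part will be the exchangeability bookkeeping: making the ``observed'' statistic at the identity genuinely interchangeable with the whole family of $N$ sampled statistics. Within a single block the statistics are a coset listed as consecutive powers, hence \emph{not} mutually exchangeable, so uniformity really comes from the common factor $\mu$ together with the i.i.d.-across-blocks structure, and both symmetry moves above must be justified as exact measure-preserving reparametrizations of $(A,\{B_i\},\{\Gamma_i\},\mu)$. The genuinely delicate issue is ties: distinct permutations give distinct $W_0$-values almost surely, but a single permutation can occupy slots in two different blocks (cosets of distinct random cyclic subgroups can intersect), and such coincidences, counted with multiplicity, can distort the rank (this is already visible in degenerate small-$n$ cases). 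Controlling these cross-block coincidences — or arguing they have no effect in the relevant regime — is the step I expect to require the most care.
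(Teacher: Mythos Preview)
Your plan is exactly the paper's intended argument, only far more carefully executed: the paper's entire proof is the single sentence ``Under $H_0$, Algorithm~\ref{alg:B} returns $\frac{r}{n(i_{max}+1)}$ for $r$ the rank of a random number among $n(i_{max}+1)$ random numbers sampled according to the same procedure.'' None of the exchangeability bookkeeping you set up (the reparametrisation to $W_0$, the coset description of the blocks, the two symmetry moves) appears there; the paper simply asserts that the observed value is one of $N$ exchangeable samples.

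Your two symmetry moves are correct. In particular your claim that the $i_{max}+1$ block multisets are i.i.d.\ is right: after your change of variables, block $0$ is $\langle A\rangle\mu$ and block $i$ is $\langle B_i\rangle(\Gamma_i\mu)$, and since $\Gamma_i$ is uniform and independent of $\mu$, the pair $(\langle B_i\rangle,\Gamma_i\mu)$ is independent of $\mu$ and has the same law as $(\langle A\rangle,\mu)$; independence across $i$ then follows. Together your moves show that $C$ has the same law as the ``$\ge$-rank'' of a uniformly random slot among the $N$ slots.

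The concern you flag at the end is not merely delicate; it actually falsifies the \emph{first} sentence of the theorem as stated. Take $n=2$ and any $i_{max}\ge 1$: for $n=2$ every block is the same two-element set $\{U\!\cdot V,\,U\!\cdot(12)V\}$, so the return value is $\tfrac12$ or $1$, each with probability $\tfrac12$, and exact uniformity on $\{\tfrac{1}{N},\dots,1\}$ fails. So do not try to repair this part. What your argument \emph{does} yield is the second sentence: the $\ge$-rank of a uniformly random element in any multiset of size $N$ is stochastically at least uniform on $\{1,\dots,N\}$ (group the multiset by distinct values with multiplicities $c_1,\dots,c_m$; the probability that the rank is $\le r$ equals $N^{-1}\sum_{j:\,s_j\le r}c_j=s_{j^*}/N\le r/N$ where $s_j=\sum_{l\le j}c_l$). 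Hence $\Prob(\bar x\le\alpha)\le\alpha$ and the conservativeness claim holds. Your proposal, completed with this last observation, proves the p-value statement; the exact-uniformity clause should be read as holding only in the absence of cross-block coincidences.
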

\begin{proof}
Under $H_0$, Algorithm \ref{alg:B} returns $\frac{r}{n(i_{max}+1)}$ for $r$ the rank of a random number among $n(i_{max}+1)$ random numbers sampled according to the same procedure.
\end{proof}

\section{Conclusions}\label{conclusions}

We have shown how the computation of the p-value for a statistical test can be made faster using the fast Fourier transform. The method can be applied to many statistical tests based directly or indirectly on the sampling of permutations. However, except for the correlation coefficient, further work is needed to know when the method really improves performances, in practice or asymptotically. Proving a good bound on the time complexity will in some case require new ideas. We leave open the problem of showing that the covariance in Theorem \ref{th:covariance} is actually always negative or finding a counterexample. This open problem hints at the existence of a different, maybe simpler proof of Theorem \ref{th:covariance}.

\section*{Acknowledgments}

I would like to thank Tomas Hruz from the Department of Computer Science of ETH Z\"urich for his proofreading, many great advices and his efficacious encouragements. Part of the research described in the present article was done at Nebion AG, largely as a free project. It would not have been possible without the enthusiasm and the friendly support of Stefan Bleuler the CTO at Nebion. Peter Widmayer, Gaston Gonnet and Peter B\"uhlmann from the Department of Computer Science and the Department of Statistics of ETH Z\"urich also have my gratitude for helpful discussions and advice.

\end{document}